\title{Finding a Battleship of Uncertain Shape\footnote{This work was initiated at the 2nd Austrian Computational Geometry Reunion Workshop in Strobl, June 2021. E.-M. H. supported by the Austrian Science Foundation FWF, project F55-02. D.P. partially supported by FWF within the collaborative DACH project \emph{Arrangements and Drawings} as FWF project \mbox{I 3340-N35}. M.L. partially supported by the Dutch Science Foundation (NWO) under grant number 614.001.504}}
\titlerunning{Finding a Battleship}
\author[1]{Eva-Maria Hainzl}
\author[2]{Maarten L\"offler}
\author[3]{Daniel Perz}
\author[4]{Josef Tkadlec}
\author[5]{Markus Wallinger}
\affil[1]{Institute of Discrete Mathematics and Geometry, TU Wien\\
  \texttt{eva-maria.hainzl@tuwien.ac.at}}
\affil[2]{Department of Computing and Information
        Sciences, Utrecht University\\
  \texttt{m.loffler@uu.nl}}
\affil[3]{Institute of Software Technology, TU Graz\\
  \texttt{daperz@ist.tugraz.at}}
\affil[4]{Department of Mathematics, Harvard University\\
  \texttt{tkadlec@math.harvard.edu}}
\affil[5]{Algorithms and Complexity Group, TU Wien\\
  \texttt{mwallinger@ac.tuwien.ac.at}}
\authorrunning{E.-M. Hainzl, M. Löffler, D. Perz, J. Tkadlec, M. Wallinger}
\newcommand{\F}{\mathcal{F}}
\newcommand{\den}{\pi}
\newcommand{\spn}{\operatorname{sp}}
\newcommand{\Z}{\mathbb{Z}}
\newcommand{\N}{\mathbb{N}}
\newcommand{\formatshapecell}[1]{
\ifthenelse{\equal{#1}{.}}{{\includegraphics[scale=.4,page=4]{figures/symbols}}}{
\ifthenelse{\equal{#1}{o}}{{\includegraphics[scale=.4,page=2]{figures/symbols}}}{
?}}
}
\newcommand{\formatpatterncell}[1]{
\ifthenelse{\equal{#1}{.}}{{\includegraphics[scale=.4,page=4]{figures/symbols}}}{
\ifthenelse{\equal{#1}{x}}{{\includegraphics[scale=.4,page=3]{figures/symbols}}}{
?}}
}
\newcommand{\shape}[1]{\ensuremath{
  [\@tfor\next:=#1\do{\formatshapecell{\next}}]}%
}
\newcommand{\pattern}[1]{\ensuremath{
  (\@tfor\next:=#1\do{\formatpatterncell{\next}})}%
}
\newcommand{\reverse}[1]{\overline{#1}}
\newtheorem*{theorem*}{Theorem}
\begin{document}

\maketitle

\begin{abstract}
Motivated by a game of \emph{Battleship}, we consider the problem of efficiently hitting a ship of an uncertain shape within a large playing board.
Formally, we fix a dimension $d\in\{1,2\}$.
A ship is a subset of $\Z^d$.
Given a family $\F$ of ships, we say that an infinite subset $X\subset\Z^d$ of the cells \emph{pierces} $\F$, if it intersects each translate of each ship in $\F$ (by a vector in $\Z^d$).
In this work, we study the lowest possible (asymptotic) density $\den(\F)$ of such a piercing subset.
To our knowledge, this problem has previously been studied only in the special case $|\F|=1$ (a single ship).
As our main contribution, we present a formula for $\den(\F)$ when $\F$ consists of 2 ships of size 2 each,
and we identify the toughest families in several other cases.
We also implement an algorithm for finding $\den(\F)$ in 1D.

\end{abstract}

 \begin {figure}
   \mbox{}\hfill
   \includegraphics [page=1] {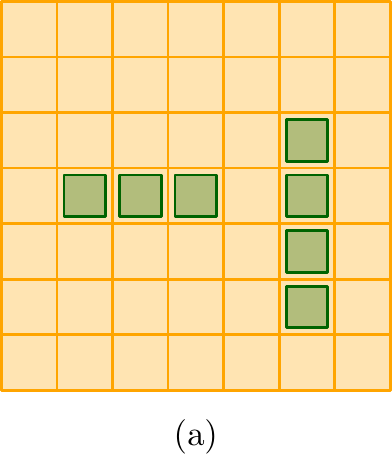} \hfill
   \includegraphics [page=2] {figures/example-2d.pdf} \hfill
   \includegraphics [page=3] {figures/example-2d.pdf} \hfill
   \mbox{}
   \caption 
   {
     (a) A game with two ships: a $3 \times 1$ rectangle and a $1 \times 4$ rectangle. Note that we {\bf do not} allow individual ships to be rotated.
     (b) A shooting pattern that is certain to hit both ships, no matter where they are.
     (c) A sparser (in fact, optimal) shooting pattern.
   }
   \label {fig:example-2d}
 \end {figure}

\section{Introduction}
In a game \emph {Battleship}, two players first secretly place a family $\F$ of \emph {ships} (often rectangles) on a \emph{domain} (an integer grid),
and then they aim to locate the opponent's ships by querying individual grid cells.
Inspired by the game, we consider the problem of finding a sparse \emph {shooting pattern}:
a subset of the grid cells that is guaranteed to hit at least one cell of each ship,
no matter how the ships are translated within the domain.
See Figure~\ref {fig:example-2d} for an example and Section~\ref {sec:prelims} for a formal problem statement.

This problem is surprisingly intricate.
In this note, we make two simplifying assumptions.

\subparagraph {Infinite domains.}
First,
in order to avoid boundary effects, we assume
the domain is an infinite grid $\Z^d$.
Since any shooting pattern on an infinite domain is also infinite,
we measure its quality using the (asymptotic) \emph {density},
refer to Figure~\ref {fig:infinite-2d}.
Note that the problem is subtle;
for instance, for two $L$-shaped triominoes as ships, the lowest possible density of a shooting pattern depends on the relative orientation of the ships
(see Figure~\ref {fig:double-2d}). 

 \begin {figure}
   \mbox{}\hfill
   \includegraphics [page=1] {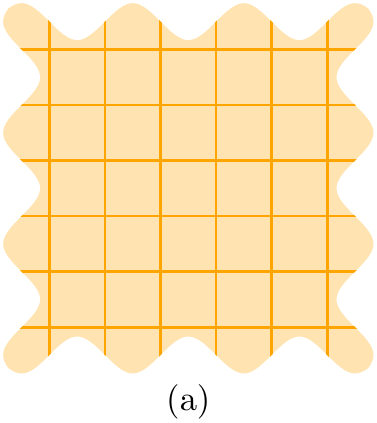} \hfill
   \includegraphics [page=2] {figures/infinite-2d.pdf} \hfill
   \includegraphics [page=3] {figures/infinite-2d.pdf} \hfill
   \mbox{}
   \caption 
   {
     (a) An infinite playing board.
     (b) A single $L$-shaped ship. It may be translated, but not rotated.
     (c) An optimal shooting pattern with density $\frac13$ hitting every possible translation.
        }
   \label {fig:infinite-2d}
 \end {figure}

\begin{lemma}\label{lem:2l}
Let $\F_{180}$ and $\F_{90}$ be families of two L-shaped triominoes from~\cref{fig:double-2d}. Then 
$\den(\F_{180})=\frac13$ and $\den(\F_{90})= \frac12$.
\end{lemma}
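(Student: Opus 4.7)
The plan is to prove each equality with matching upper and lower bounds. The upper bounds, and the lower bound $\den(\F_{180}) \geq \tfrac13$, follow by easy constructions; the main obstacle is the lower bound $\den(\F_{90}) \geq \tfrac12$, since the trivial size-$3$ bound $\den \geq \tfrac13$ is not enough.

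For $\F_{180}$, the lower bound $\tfrac13$ is just the size-$3$ bound (any single ship of size $k$ already forces density $\geq 1/k$). For the matching upper bound, I would verify that the density-$\tfrac13$ pattern $X = \{(a,b) \in \Z^2 : 2a+b \equiv 0 \pmod 3\}$ pierces both L-shapes of $\F_{180}$: the three cells of any L-translate take all three residues of $2a+b$ modulo $3$, and this is preserved under $180°$ rotation of the L (which negates the linear form, still a bijection on $\Z/3\Z$).

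For $\F_{90}$, the upper bound $\tfrac12$ is witnessed by the checkerboard $X = \{(a,b) \in \Z^2 : a+b \text{ even}\}$: every three-cell L-translate meets both parity classes.

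The hard part is the lower bound $\den(\F_{90}) \geq \tfrac12$. The plan is to build, for every piercing set $X$, an explicit injection $\phi \colon \Z^2 \setminus X \to X$, which yields $\den(X) \geq \tfrac12$ in the limit. Writing the two shapes (WLOG) as $L_1 = \{(0,0),(1,0),(0,1)\}$ and $L_2 = \{(0,0),(1,0),(1,1)\}$, both translates $L_i + (a,b)$ share the bottom pair $\{(a,b),(a+1,b)\}$, so whenever $(a,b),(a+1,b) \notin X$ the $L_2$-constraint at $(a,b)$ forces $(a+1,b+1) \in X$. I then set
\[
\phi(a,b) = \begin{cases} (a+1, b) & \text{if } (a+1,b) \in X, \\ (a+1, b+1) & \text{otherwise,} \end{cases}
\]
which is well-defined by the observation above. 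The delicate case for injectivity is when two preimages use different rules but yield the same image; equating then forces them to be $(a,b)$ and $(a, b-1)$ with the horizontal pair $(a,b-1),(a+1,b-1) \notin X$, and the $L_1$-constraint at $(a,b-1)$ forces $(a,b) \in X$, contradicting $(a,b) \notin X$. The subtle point is choosing the ``upper-right'' fallback $(a+1,b+1)$ rather than the equally-available $(a,b+1)$: the latter would yield a valid but non-injective map, so getting the direction right is where the specific orientation of $\F_{90}$ enters.
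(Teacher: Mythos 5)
Your proof is correct, and for the easy parts (the trivial lower bound for $\F_{180}$, and explicit mod-$3$ and checkerboard patterns for the upper bounds) it matches the paper in substance, except that the paper simply points to the patterns drawn in its figures. For the key lower bound $\den(\F_{90})\ge\tfrac12$, both arguments hinge on the same forcing observation --- if a horizontal domino $\{(a,b),(a+1,b)\}$ is unshot, then the $L_2$-constraint forces $(a+1,b+1)\in X$ and the $L_1$-constraint forces $(a,b)\in X$ at the appropriate translate --- but the bookkeeping differs. The paper partitions the plane into vertical slabs of width $2$ and does an averaging argument over rows within each slab (an empty row forces the row above it to be full, so consecutive row-pairs average at least one shot per two cells). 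You instead build a global injection $\phi\colon\Z^2\setminus X\to X$ with bounded displacement and verify injectivity by a short case analysis; your observation that the fallback must be the \emph{upper-right} cell $(a+1,b+1)$ rather than $(a,b+1)$ is exactly the point where injectivity would otherwise fail, and you resolve the one nontrivial collision correctly via the $L_1$-constraint. The paper's slab argument is shorter to state; your injection avoids any decomposition and is arguably more robust (it does not require choosing a slab width adapted to the shapes), at the cost of the case analysis.
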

\begin{proof}
The shooting patterns shown in \cref{fig:double-2d} imply
 $\den(\F_{180})\le \frac13$ and $\den(\F_{90})\le\frac12$.
For $\F_{180}$ the matching lower bound is trivial, since $\den(\F_{180})\ge\den(\F)\ge\frac13$, where $\F$ consists of a single L-shaped triomino.
It remains to prove $\den(\F_{90})\ge \frac12$.
Split the plane into infinite vertical slabs of width~2.
We argue for each slab separately.
Fix a row. If neither cell is shot, then in the row just above it, both cells must be shot.
Hence the overall density is at least~$\frac12$.
\end{proof}

 \begin {figure}
   \mbox{}\hfill
   \includegraphics [page=1] {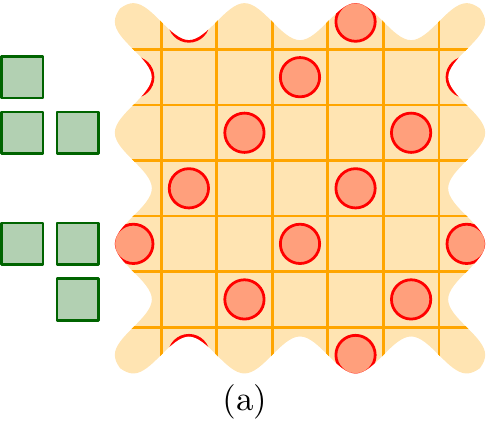} \hfill
   \includegraphics [page=2] {figures/double-2d.pdf} \hfill
   \mbox{}
   \caption 
   {
     Two sets of two L-shaped ships of size $3$.
     (a) Two ships, rotated $180^\circ$.
     The same optimal shooting pattern with density $\frac13$ as for a single ship still works.
     (b) Two ships, rotated $90^\circ$. The optimal shooting pattern has density $\frac12$ (see Lemma~\ref{lem:2l}).
   }
   \label {fig:double-2d}
 \end {figure}

\subparagraph {One dimension.}
Second, for the remainder of this note we focus on the case $d=1$, which is far from trivial
if we consider not only \emph {connected} ships, but arbitrary finite subsets of integers (see Section~\ref {sec:prelims}).
In~\cref{app:d=2}, we describe how our results extend to higher dimensions.
Thus, our problem is:
Given a family $\F=\{S_1,\dots,S_n\}$ of $n$ ships in $\Z$ (or in $\Z^d$),
find the minimum density $\den(\F)$ of a shooting pattern that hits each translate of each ship $S_i\in\F$.

\subsection{Related work}
The game \emph{Battleship} spawned research along several fronts~\cite{crombez_et_al,fiat1989find}.
Here we review the 1D~case.

We say that a family $\F$ of ships in $\Z$ is of type $(k_1,\dots,k_n)$ if it consists of $n$ ships with sizes $k_1\le\dots\le k_n$, respectively.
Previous work has studied families consisting of a single (not necessarily connected) ship, that is, families of type $(k)$ for some $k\in\N$.
It is easy to see that
$\den(\F)=1/k$ for any $(k)$-family $\F$ with $k\in\{1,2\}$.
In 2008, Schmidt and Tuller conjectured a formula for $\den(\F)$ for any $(3)$-family $\F$~\cite{schmidt2008covering},
but as of now its validity is still open.

Given this difficulty, other works studied the toughest instances of a given type.
Formally, given a type $t$, let
$M_t = \sup_{\F\textrm{ has type $t$}} \{\den(\F)\}
$
be the smallest density that suffices to hit any family of type $t$.
Already in 1967, Newman~\cite{newman1967complements} showed that $M_{(3)}=\frac25$
(one toughest instance is the ship in \cref{fig:example-1d}(a))
and that $M_k=\Theta(\log k /k)$ as $k\to\infty$.
Recently, it was shown that $M_{(4)}=\frac13$~\cite{talbot2019polychromatic}.
Also, given a ship $S$,
the density $\den(\{S\})$
can be found using a ``sliding window'' algorithm~\cite{bollobas2011covering}.
The algorithm can be used to establish lower bounds such as $M_{(5)}\ge \frac3{11}$.

To our knowledge, the problem for multiple ships
has not been studied; however, we point out a similar work in the continuous setting for rectangles in two dimensions~\cite{dumitrescu2021piercing}.

\subsection{Our contribution}
We propose to study the problem for multiple ships or, equivalently, for a single ship of uncertain shape.
(Note that by considering suitable families, we can model mirrored or reflected ships on top of translated ships, cf.~\cref{fig:double-2d}.)
Apart from Lemma~\ref{lem:2l} above, we present results in 1D
(see~\cref{app:d=2} for extensions to 2D).
First, we note that the sliding window algorithm of~\cite{bollobas2011covering} can be adapted to families of multiple ships in a straightforward way, see~\cref{thm:algorithm}. We implement the algorithm and use it to obtain lower bounds such as $M_{(2,3)}\ge \frac35$ (due to e.g. $\F=\{[0,1],[0,2,4]\}$).
As our main contribution, we present three results for families of ships of small size $k$ ($k$-ships).
First, for any family $\F$ of two 2-ships, we find an explicit formula for $\den(\F)$, see~\cref{thm:formula22}.
Second, we determine $M_{(2,\dots,2)}$, that is, we identify the toughest instances for families consisting of any number of 2-ships, see~\cref{thm:n2}.
Third, we determine the toughest instances for families consisting of any 3-ship together with its reflection, see~\cref{thm:32}. 
Finally, in \cref{app:general-ub}
we present general bounds for the density of the toughest instances of $n$ ships of~size~$k$~each.

\subsection{Preliminaries} \label {sec:prelims}
A ship of size $k$ (a \emph{$k$-ship}) is a $k$-tuple
$[a_1, a_2, \dots a_k]$ with $a_i \in \mathbb{Z}, i\geq 1$.
A~\emph{span} of a ship $S$ is
$\spn(S)=a_k-a_1+1$.
See~\cref{fig:example-1d} for an illustration. 
A finite family of ships $\F = \{S_1,S_2,\dots S_n\}$ has a span $\spn(\F) = \max_{S \in \F} \spn(S)$.
A \emph{shooting pattern} is a $01$-sequence $X = (x_i)_{i \in \mathbb{Z}}$.
We say that a shooting pattern \emph{hits} a $k$-ship
$S=[a_1, a_2,\dots a_k]$
(or that $X$ is a shooting pattern for $S$) if
    $\sum_{i=1}^{k} x_{n+a_i} \geq 1,\; \forall n\in \mathbb{Z}$.
The density of a shooting pattern $X$ is defined as 
$\pi(X) = \lim_{N\rightarrow \infty} \frac{\sum_{|i|\leq N}x_i}{2N+1}.$
The density of a ship $S$ and of a family $\F$ of ships is then defined as
\[\pi(S) = \inf_{X\;\text{hits}\;S} \pi(X) \quad \text{and}\quad
\pi(\F) = \inf_{X\;\text{hits each}\;S\in\F} \pi(X).
\]

 \begin {figure}
   \mbox{}\hfill
   \includegraphics [page=1] {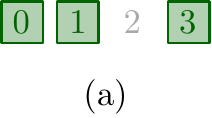} \hfill
   \includegraphics [page=2] {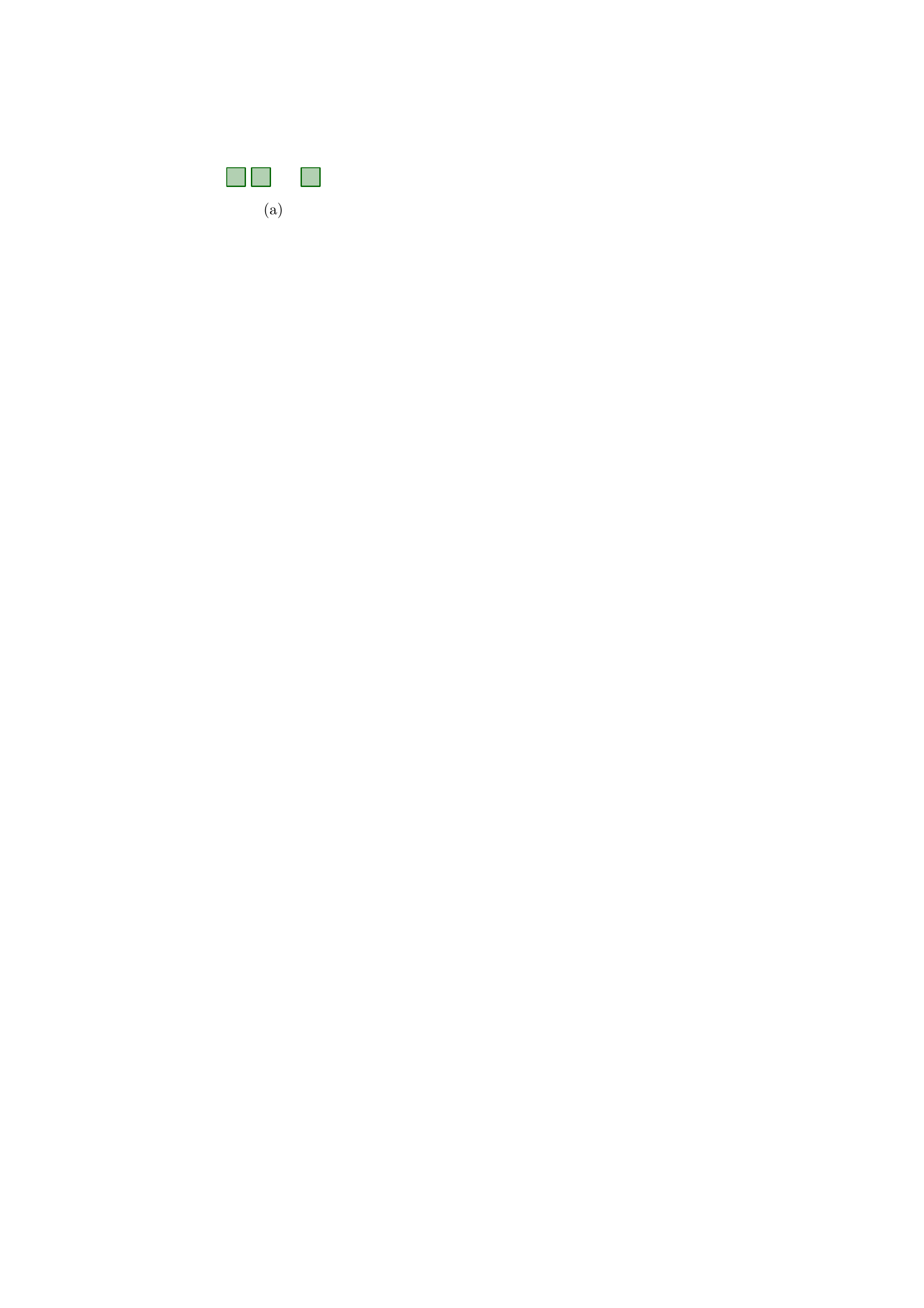} \hfill
   \includegraphics [page=3] {figures/example-1d.pdf} \hfill
   \mbox{}
   \caption 
   {
     (a) A~single disconnected 1-dimensional ship $S=[0,1,3]$ of size $k=3$ and span $\spn(S)=4$.
     (b-c) Possible periodic shooting patterns for $\{S\}$, with densities $\frac12$ and $\frac25$, respectively.
   }
   \label {fig:example-1d}
 \end {figure}
 
Further, we define $m^n_k=\inf\{ \den(\{S_1,\dots,S_n\}) \mid |S_i|=k \text{ for } i=1,\dots,n \}$ and $M^n_k=\sup\{ \den(\{S_1,\dots,S_n\}) \mid |S_i|=k \text{ for } i=1,\dots,n \}$.
That is, $m^n_k$ is the required density for the simplest instances,
whereas $M^n_k$ is the required density for the toughest instances, among families that consist of $n$ ships of size $k$ each.
Regarding $m_k^n$, it is straightforward to prove $m_k^n = \frac{1}{k}$ (even when no two ships in the family are translates of each other).
Regarding $M^n_k$, in~\cref{app:general-ub} we show $1-\frac{e}{\sqrt[k-1]{n}} \le M_k^n\le \min\left\{\frac{n}{n+1},\frac{1+\log(kn)}{k}\right\}$ when $k\ge 2$.
%
We also list two auxiliary results (with proofs in \cref{app:algos}).
\begin{theorem}[Sliding window algorithm]\label{thm:algorithm}
Given a family $\F$ with span $s=\spn(\F)$, the density $\den(\F)$ can be computed in time polynomial in $2^s$.
\end{theorem}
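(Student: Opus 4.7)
The plan is to encode valid shooting patterns as bi-infinite walks in a finite directed graph and then reduce the computation of $\pi(\F)$ to a minimum mean-weight cycle problem.

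First I would observe that validity is a local property. After shifting each ship $S_i\in\F$ so its leftmost cell sits at $0$, the condition ``$X$ hits every translate of every ship in $\F$'' is equivalent to saying that every length-$s$ window $w_n=(x_n,x_{n+1},\dots,x_{n+s-1})$ is \emph{valid}, i.e., for each $S_i\in\F$, at least one cell of $S_i$ (viewed inside the window) lies on a $1$. Validity of a single window can be tested in time $O(s\cdot|\F|)$.

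Next I would build a de Bruijn-style graph $G$ on vertex set $\{0,1\}^{s-1}$, so $|V(G)|=2^{s-1}$. For each vertex $u=u_1u_2\cdots u_{s-1}$ and each bit $b\in\{0,1\}$, include a directed edge $u\to v$ with $v=u_2\cdots u_{s-1}b$ whenever the $s$-bit window $u_1u_2\cdots u_{s-1}b$ is valid, and assign this edge weight $b$. Then $G$ can be constructed in time $O(2^s\cdot s\cdot|\F|)$. Bi-infinite walks in $G$ are in bijection with valid shooting patterns $X$ (the state at time $n$ is $(x_{n-s+2},\dots,x_n)$), and $\pi(X)$ equals the asymptotic average edge weight along the walk.

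I would then show $\pi(\F)=\mu^*$, where $\mu^*$ denotes the minimum mean weight of a directed cycle in $G$. The upper bound $\pi(\F)\le\mu^*$ follows by traversing an optimal cycle periodically in both directions, which yields a valid periodic shooting pattern whose density equals the mean weight of the cycle. For the lower bound, a standard cycle-decomposition argument---any long finite segment of a walk in a finite graph decomposes into simple cycles plus a bounded-length tail---shows that the average weight along any bi-infinite walk is at least $\mu^*$, so $\pi(X)\ge\mu^*$ for every valid $X$. Finally, $\mu^*$ can be computed from $G$ in time $O(|V(G)|\cdot|E(G)|)=O(2^{2s})$ by a standard minimum mean-weight cycle algorithm, which is polynomial in $2^s$.

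The main technical point is the lower bound $\pi(\F)\ge\mu^*$: one must verify that no (possibly aperiodic) valid shooting pattern has density strictly below every cycle mean of $G$. This is standard in combinatorial optimization, but requires a little care because our walks are bi-infinite and because $\pi(\F)$ is defined as an infimum over all valid patterns rather than only periodic ones; the cycle-decomposition argument above, applied to the window $[-N,N]$ and letting $N\to\infty$, handles this.
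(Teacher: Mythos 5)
Your proposal is correct and follows essentially the same route as the paper: the paper likewise builds an auxiliary graph on valid length-$s$ (equivalently, length-$(s-1)$) windows with edges given by overlap, and computes a minimum mean cycle, citing Bollob\'as, Janson, and Riordan for the underlying argument. Your write-up just spells out the validity check and the cycle-decomposition lower bound that the paper leaves implicit.
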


Finally, for an integer $d$ and a ship $S = [a_1, \ldots, a_k]$ let $dS = [da_1, \ldots, da_k]$, and likewise for a family $\F = \{S_1, \ldots, S_n\}$ let $d\F = [dS_1, \ldots, dS_n]$.
\begin{lemma} \label{lem:scaling}
  Let $d$ be a positive integer and $\F$ any family. Then $\pi(\F) = \pi(d\F)$.
\end{lemma}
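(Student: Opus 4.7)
The plan is to use the simple structural fact that scaling a ship $S$ by $d$ forces each translate of $dS$ to lie entirely within one residue class modulo $d$. Thus the scaled problem decouples into $d$ independent copies of the original problem, one on each residue class, which will let us prove equality in both directions. Concretely, partition $\Z$ into the $d$ residue classes $C_r=\{dm+r : m\in\Z\}$ for $r=0,\ldots,d-1$. For any ship $S=[a_1,\ldots,a_k]\in\F$ and any translation by $n\in\Z$, the translate of $dS$ is $\{n+da_i : 1\le i\le k\}\subset C_{n \bmod d}$.

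For the upper bound $\pi(d\F)\le\pi(\F)$, I take a near-optimal shooting pattern $X=(x_m)$ for $\F$ and define $X'=(x'_j)$ by placing an independent copy of $X$ on each residue class, namely $x'_{dm+r}:=x_m$ for every $m\in\Z$ and every $r\in\{0,\ldots,d-1\}$. For an arbitrary $S\in\F$ and an arbitrary $n=dm_0+r$, the translate $\{n+da_i\}$ corresponds under the reindexing $dm+r\mapsto m$ to the translate $\{m_0+a_i\}$ of $S$; since $X$ hits this translate, so does $X'$. The density of $X'$ equals that of $X$, so $\pi(d\F)\le\pi(X)$; taking $\pi(X)\to\pi(\F)$ finishes this direction.

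For the lower bound $\pi(d\F)\ge\pi(\F)$, I take any shooting pattern $X'$ that hits $d\F$ and, for each $r$, define $Y^{(r)}=(y^{(r)}_m)$ by $y^{(r)}_m:=x'_{dm+r}$. The same reindexing shows each $Y^{(r)}$ is a shooting pattern for $\F$, so $\pi(Y^{(r)})\ge\pi(\F)$. A straightforward density calculation — splitting the window $|j|\le N$ into the $d$ residue classes, each of which contains $\tfrac{2N}{d}+O(1)$ positions — yields
\[
\pi(X')=\frac{1}{d}\sum_{r=0}^{d-1}\pi(Y^{(r)})\ge\pi(\F).
\]

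The main obstacle is essentially bookkeeping: making sure the reindexing sends translates of $dS$ to translates of $S$ and vice versa, and making sure the density limits behave as expected. Both are routine; in particular, one may restrict attention to periodic shooting patterns (via \cref{thm:algorithm}) so that all densities involved are genuine limits and the averaging identity above is immediate.
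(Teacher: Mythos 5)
Your proof is correct and follows essentially the same route as the paper's: scale a pattern for $\F$ onto all $d$ residue classes for one direction, and extract the $d$ subsequences $x'_{dm+r}$ for the other. The only cosmetic difference is that you average the densities of the extracted patterns where the paper picks the sparsest one; both close the argument the same way.
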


\section{Families of 2-ships and 3-ships} \label {sec:small}
Here we study families $\F$ that consist of $n$ ships of small size $k\le 3$ each.
Note that when $k=1$, we obviously have $\den(\F)=1$ (for all $n\ge 1$).
Also, for a single 2-ship $S$ it is straightforward to show that $\den(\{S\})=1/2$.
Our first non-trivial result is an explicit formula for $\den(\F)$
when $\F$ consists of two 2-ships.

\begin{theorem}[Formula for two 2-ships]\label{thm:formula22}
Let $\F=\{[0,da],[0,db]\}$, $a$, $b$ coprime and $d\ge 1$.
\[
\den(\F) = 
\begin{cases}
    1/2                    & \text{if both } a \text{ and } b \text{ are odd,}\\
    \frac{a+b+1}{2(a+b)}   & \text{otherwise.}
\end{cases}
\]
\end{theorem}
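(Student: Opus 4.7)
The plan is to first reduce to $d=1$ using \cref{lem:scaling}, so we may assume $\F = \{[0,a],[0,b]\}$ with $\gcd(a,b)=1$. When both $a$ and $b$ are odd, the bound $\den(\F) \ge \frac12$ is immediate from the single-$2$-ship case, while the alternating pattern $x_i = i \bmod 2$ attains it: odd $a, b$ force $x_n + x_{n+a} = x_n + x_{n+b} = 1$ for every $n$.

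The essential case is when exactly one of $a,b$ is even (coprimality rules out both even), so $N := a+b$ is odd. I would reformulate the problem graph-theoretically: the complement $Z := \Z \setminus X$ of a shooting pattern $X$ is exactly an independent set in the distance graph $G$ on $\Z$ whose edges connect pairs at distance $a$ or $b$. Hence $\den(\F) = 1 - \alpha(G)$, where $\alpha(G)$ denotes the supremum density of an independent set in $G$.

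The key structural step is to partition $\Z$ into consecutive length-$N$ blocks $B_k := \{kN,\dots,(k+1)N-1\}$ and show that $G[B_k]$ is isomorphic to the odd cycle $C_N$. A direct count inside $B_k$ gives $b$ edges of type ``$+a$'' and $a$ edges of type ``$+b$'', hence $N$ edges in total, and every vertex has degree exactly~$2$. Since $\gcd(a,N)=1$, the residue map $i \mapsto i \bmod N$ identifies $G[B_k]$ with the standard cycle on $\Z_N$ generated by $\pm a$, so $G[B_k]\cong C_N$. As $N$ is odd, $\alpha(C_N)=(N-1)/2$; summing the bound $|Z \cap B_k|\le (N-1)/2$ over $k$ shows that the density of $Z$ is at most $(N-1)/(2N)$, giving the lower bound $\den(\F) \ge \frac{a+b+1}{2(a+b)}$.

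For the matching upper bound, I would take a maximum independent set $Z_0$ of $G[B_0] \cong C_N$ and extend periodically, setting $Z := \bigcup_{k \in \Z}(Z_0 + kN)$. Independence within each $B_k$ is inherited from $Z_0$; the only subtlety is cross-block pairs, whose difference $N + r' - r$ lies in $\{a, b\}$ iff $r - r' \in \{a, b\}$. Since $a + b = N$ we have $\{a, b\} \equiv \{a, -a\} \pmod N$, so $r - r' \in \{a, b\}$ would create a $C_N$-edge inside $Z_0$, contradicting its independence. The resulting $X := \Z \setminus Z$ is a valid shooting pattern of density exactly $(N+1)/(2N)$. The main obstacle is cleanly establishing the isomorphism $G[B_k]\cong C_N$ (via the degree count and residue-map argument) and handling the cross-block check in the periodic construction.
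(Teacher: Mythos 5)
Your proposal is correct and follows essentially the same route as the paper: reduce to $d=1$ by scaling, handle the odd--odd case with the alternating pattern, and in the mixed case partition $\Z$ into blocks of length $a+b$, identify the constraint graph on a block with an odd cycle $C_{a+b}$ (the paper phrases this as a directed cycle with edges $v-u\in\{a,-b\}$ and takes a minimum vertex cover; you take the undirected distance graph and a maximum independent set, which is the complementary view), and extend a block-optimal solution $(a+b)$-periodically, checking that cross-block constraints reduce to within-block edges. The only cosmetic difference is that the paper verifies the periodic construction via the wrap-around identity $x_{n+a}=x_{n-b}$ rather than your cross-block edge check, but these are the same observation.
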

\begin{proof}
Let $\F'=\{[0,a],[0,b]\}$.
By Lemma~\ref {lem:scaling}, it suffices to determine $\den(\F')$. Clearly, $\den(\F')\ge \den(\{[0,a]\}=1/2$.
When both $a$ and $b$ are odd, a shooting pattern $X$ defined by ``$x_i=1$ if and only if $i$ is even'' provides a matching construction.
From now on, assume that precisely one of $a$, $b$ is odd (that is, $a+b$ is odd).

Split $\Z$ into blocks of $a+b$ consecutive integers.
Let $S=\{0,\dots,a+b-1\}$ be one such block and let $X'$ be a shooting pattern for $\F'$ on $S$ (instead of on $\Z$).
We will argue that $S$ needs to be hit at least $(a+b+1)/2$ times (that is, $\sum_{i\in S} x'_i\ge(a+b+1)/2$).
Consider a graph $G=(S,E)$ with nodes $S$ and directed edges $E=\{(u,v) \mid v-u\in\{a,-b\}\}$.
This graph records the constraints on the shooting pattern:
For every edge $(u,v)\in E$, we must have $x_u+x_v\ge 1$.
Since $|S|=a+b$, each node in $G$ has indegree 1 and outdegree 1.
Moreover, since $a$ and $b$ are coprime, the graph $G$ is connected.
Hence it is a directed cycle on an odd number $a+b$ of nodes.
Its minimum vertex cover has size $(a+b+1)/2$, thus $\den(\F')\ge (a+b+1)/2$.

To prove that this bound is tight, consider any vertex cover $C\subseteq S$ of $G$ of the minimum size $(a+b+1)/2$.
Then the $(a+b)$-periodic shooting pattern $X^C$ defined by ``$x^C_i=1$ if and only if $i\pmod{(a+b)}\in C$''
hits $\F'$ on $\Z$:
Indeed, consider any translate $(n,n+a)$ of the ship $[0,a]$. Suppose $n\equiv r\pmod{(a+b)}$ for some $0\le r< a+b$. If $r<b$ then $n$ and $n+a$ both belong to the same block, thus $x_n+x_{n+a}\ge 1$, since $C$ is a vertex cover. On the other hand, if $r\ge b$ then by the $(a+b)$-periodicity of the shooting pattern we have $x_{n+a}=x_{n-b}$. Since $r\ge b$, both $n-b$ and $n$ belong to the same block, so we conclude as before. For translates of the ship $[0,b]$ we argue analogously.
\end{proof}
As a corollary, we have $M^2_2=2/3$, as witnessed by families $\{[0,d],[0,2d]\}$ for any $d\ge 1$.

Next, we study the toughest instances in two other cases,
namely for any number of $2$-ships, and for a 3-ship together with its reflection.
\begin{theorem}[Toughest families of 2-ships]\label{thm:n2}
 For any $n\ge 1$ we have $M^n_2=n/(n+1)$.
\end{theorem}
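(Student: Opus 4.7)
The plan is to establish matching lower and upper bounds.

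For the lower bound $M^n_2 \ge \frac{n}{n+1}$, I would exhibit the family $\F^* = \{[0,1],[0,2],\ldots,[0,n]\}$ and show $\pi(\F^*)=\frac{n}{n+1}$. The matching construction is the period-$(n+1)$ pattern with a single unshot cell per period: if $x_p = 0$ then $p\equiv 0 \pmod{n+1}$, so $p+i\not\equiv 0\pmod{n+1}$ for every $i\in\{1,\ldots,n\}$, meaning $x_{p+i}=1$ and the translated ship $[0,i]$ is hit. Conversely, any two unshot positions at distance $i\in\{1,\ldots,n\}$ would leave a translate of $[0,i]$ unhit, so at most one of any $n+1$ consecutive positions can be unshot, forcing density at least $n/(n+1)$.

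For the upper bound $M^n_2 \le \frac{n}{n+1}$, I would fix an arbitrary family $\F = \{[0,a_1],\ldots,[0,a_n]\}$ with $A = \{a_1,\ldots,a_n\}$ and $a_{\max} = \max A$, and plan to construct periodic shooting patterns of density arbitrarily close to $n/(n+1)$. Equivalently, I want an independent set of density close to $1/(n+1)$ in the Cayley graph $G = \mathrm{Cay}(\Z, \pm A)$. The core step is a greedy $(n+1)$-coloring of the finite interval $\{0,1,\ldots,L-1\}$, processed left-to-right and viewed as an induced subgraph of $G$: when coloring position $k$, only its backward neighbors $\{k - a_i : a_i \le k\}$ are already colored, and there are at most $n$ of them, so some color in $\{0,1,\ldots,n\}$ is free. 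Pigeonhole then yields a color class $C$ with $|C| \ge L/(n+1)$ that is independent in $G$.

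I would then periodize by tiling $\Z$ with shifted copies of $C$ separated by buffers of width $a_{\max}$; formally, $Y := \bigcup_{j\in\Z}(C + j(L + a_{\max}))$. The buffer ensures that any two elements of $Y$ from distinct copies of $C$ are at distance strictly greater than $a_{\max}$, so no forbidden distance $a_i$ arises across copies and $Y$ stays independent in $G$. The periodic density of $Y$ is $|C|/(L + a_{\max}) \ge L/\bigl((n+1)(L + a_{\max})\bigr)$, which tends to $1/(n+1)$ as $L\to\infty$. The complement $X := \Z \setminus Y$ is a periodic shooting pattern of density at most $n/(n+1) + \varepsilon$ for any chosen $\varepsilon>0$, proving $\pi(\F) \le n/(n+1)$.

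The main obstacle is the upper bound, and specifically securing the factor "$n+1$" rather than "$2n+1$" in the greedy-coloring step. It is essential that the coloring is performed on a linearly ordered interval so that each vertex sees at most $n$ already-colored neighbors rather than the $2n$ it has in total; a naive greedy on the cyclic group $\Z/M\Z$ would force $2n+1$ colors and yield only the weaker density $1/(2n+1)$. The buffer trick is a routine device to lift the finite independent set to a periodic pattern with a well-defined density, avoiding boundary effects at the seams.
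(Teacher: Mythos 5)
Your proposal is correct and rests on the same key observation as the paper's proof: scanning $\Z$ left to right, each cell has at most $n$ already-processed neighbours (at distances $-a_1,\dots,-a_n$), which is exactly what secures the factor $n+1$ rather than $2n+1$; the paper realizes this as a direct greedy placement of unshot cells on a half-line with each zero charging at most $n$ forced ones, while you package the same idea as a greedy $(n+1)$-coloring of a finite window, a pigeonhole choice of the largest color class, and a buffer-and-tile periodization. Your lower-bound family $\{[0,1],\dots,[0,n]\}$ and its analysis also coincide with the paper's.
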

\begin{proof}
For $n=1$ the claim is trivial. For $n=2$ it follows from~\cref{thm:formula22}.
Assume $n\ge 3$.

First, note that for a family $\F_n=\{[0,1],\dots,[0,n]\}$ we have $\den(\F_n)=n/(n+1)$:
Indeed, split $\Z$ into blocks of $n+1$ consecutive integers.
Then any shooting pattern $X_n$ for $\F_n$ may miss at most 1 number from each block.
On the other hand, the pattern $X_n$ defined by ``$x_i=0$ if and only if $i$ is a multiple of $n+1$'' hits $\F_n$.

To prove the upper bound, consider any $n$ positive integers
$a_1<\dots<a_n$,
and the corresponding family $\F=\{[0,a_1],\dots,[0,a_n]\}$.
We construct a shooting pattern $X$ for $\F$ with density at most $n/(n+1)$.
We proceed in steps.
Initially, we set $x_t=1$ for all $t$ with $|t|\le a_n$.
Then, we process integers $t> a_n$ in increasing order.
Whenever $x_t$ is not yet set, we set $x_t=0$ and $x_{t+a_i}=1$ for each $i=1,\dots,n$.
(Note that some of $x_{t+a_i}$ might have already been set to 1, due to some $t'<t$.)
By construction, $X$ hits all translates of $\F$ within the interval $[a_n+1,\infty)$.
Moreover, since for every $x_t$ set to $0$ there are at most $n$ values newly set to 1,
in the limit $t\to\infty$ we obtain $\den(X[a_n+1,\infty))\le n/(n+1)$.
Similarly, we process integers $t<-a_n$ in decreasing order and get $\den(X(-\infty,-a_n-1])\le n/(n+1)$.
Together with the finite initial segment
$X[-a_n,a_n]$
this gives $\den(X)=\den(X(-\infty,\infty))\le n/(n+1)$.
\end{proof}

Given a ship $S=[a_1,\dots,a_k]$, let $\reverse{S}=[-a_k,\dots,-a_1]$ be its reflection.

\begin{theorem}[Toughest 3-ship with its reflection]\label{thm:32} 
Let $S$ be a 3-ship and let $\F=\{S,\reverse{S}\}$. Then
$\den(\F)\le \frac25$, with equality if and only if $S\in\{[0,2d,3d],[0,3d,4d]\}$ (or their reflections) for some $d\ge 1$.
\end{theorem}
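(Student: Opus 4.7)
The plan is to first reduce to primitive ships $S = [0, a, b]$ with $\gcd(a, b) = 1$ via Lemma~\ref{lem:scaling}, then establish the universal upper bound $\den(\F) \le 2/5$ for all such ships, and finally characterize the equality cases.

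For the upper bound, my main tool is a ``Cayley-cycle'' construction. Set $n = a + b$. Since $\gcd(a, n) = \gcd(a, b) = 1$, the Cayley graph of $\Z/n$ with generator $a$ is a single cycle of length $n$. Modulo $n$ we have $b \equiv -a$, so the ship $\{0, a, b\} \equiv \{-a, 0, a\}$ and its reflection $\{0, b-a, b\} \equiv \{0, -2a, -a\}$ both occupy three consecutive vertices on this Cayley cycle. Consequently, a periodic shooting pattern of period $n$ whose hit set modulo $n$ is a minimum vertex cover of the ``three-consecutive arcs'' hypergraph on the cycle hits every translate of both $S$ and $\reverse{S}$. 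Such a minimum cover has size $\ceil{n/3}$, so this construction gives density $\ceil{n/3}/n$, which equals $2/5$ precisely for $n \in \{5, 10\}$ and is strictly less than $2/5$ for $n \notin \{4, 5, 7, 10\}$.

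This leaves the finitely many primitive cases with $n \in \{4, 7, 10\}$ (and the tight ones with $n \in \{4, 5\}$, where we also need matching lower bounds). I would treat each family individually: at $n = 4$ the only primitive is $(1, 3)$, giving the same family as $(2, 3)$ at $n = 5$---tight by Newman's $\den([0, 1, 3]) = 2/5$. At $n = 7$ we have $(1, 6), (2, 5), (3, 4)$; the last gives $\{[0, 1, 4], [0, 3, 4]\}$ (matching $(1, 4)$ at $n = 5$, tight), and for the first two I would exhibit explicit shooting patterns of density $< 2/5$ using a longer period. At $n = 10$ the primitives $(1, 9)$ and $(3, 7)$ similarly require improved patterns to establish strict inequality. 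Finally, the tight lower bound for $\{[0, 1, 4], [0, 3, 4]\}$ follows from $\den([0, 1, 4]) = 2/5$, accessible via the sliding window algorithm of \cref{thm:algorithm}.

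I expect the main obstacle to lie in these exceptional cases at $n \in \{7, 10\}$: producing explicit shooting patterns of density strictly less than $2/5$ for $\{[0, 1, 6], [0, 5, 6]\}$, $\{[0, 2, 5], [0, 3, 5]\}$, $\{[0, 1, 9], [0, 8, 9]\}$, and $\{[0, 3, 7], [0, 4, 7]\}$. The Cayley-cycle argument fails here because the relevant period $n = a + b$ yields density above or equal to $2/5$; instead I would seek periodic patterns of period a small multiple of $n$ tailored to the specific offsets, or invoke \cref{thm:algorithm} to verify each case computationally.
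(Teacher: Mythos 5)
Your proposal is correct and reaches the theorem by a genuinely different generic construction. The paper reduces to $S=[0,a,a+b]$ with larger gap $a\ge b$, lays the integers out in a width-$a$ vertical slab via $(i,j)\mapsto ib+ja$, hits the resulting L-triominoes at density $\frac13$ plus a denser first column for the wrap-around translates, obtaining $(a+1)/(3a)<\frac25$ for $a\ge 6$; the remaining $10$ cases with $b<a\le 5$ are settled by the sliding-window algorithm. Your Cayley-cycle pattern of period $n=a+b$ instead covers all windows of three consecutive vertices, giving $\ceil{n/3}/n$, and your observation that $S$ and $\reverse{S}$ have \emph{identical} residue sets $\{t-a,t,t+a\}$ modulo $n$ is a clean substitute for the paper's remark that its triomino pattern is $180^\circ$-symmetric; both routes then fall back on \cref{thm:algorithm} for a finite list of exceptions and on the single-ship densities $\den([0,1,3])=\den([0,1,4])=\frac25$ for the lower bounds in the tight cases, so the proofs are complete in the same sense. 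One improvement you are leaving on the table: since $\F=\{S,\reverse{S}\}$ is unchanged under reflection, you may run your construction on whichever of the two representatives $[0,a,b]$ and $[0,b-a,b]$ has the better period ($a+b$ versus $2b-a$). Doing so disposes of all four of your non-tight exceptional families without any computation --- $\{[0,1,6],[0,5,6]\}$ and $\{[0,3,7],[0,4,7]\}$ at period $11$ with density $\frac4{11}$, $\{[0,2,5],[0,3,5]\}$ at period $8$ with density $\frac38$, and $\{[0,1,9],[0,8,9]\}$ at period $17$ with density $\frac6{17}$ --- leaving only the two tight families, and thus a smaller computational residue than the paper's own argument.
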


 \begin {figure}
   \mbox{}\hfill
   \includegraphics [page=1] {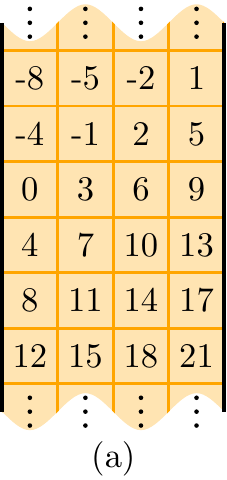} \hfill
   \includegraphics [page=2] {figures/vertical-slab.pdf} \hfill
   \includegraphics [page=3] {figures/vertical-slab.pdf} \hfill
   \includegraphics [page=4] {figures/vertical-slab.pdf} \hfill
   \includegraphics [page=5] {figures/vertical-slab.pdf} \hfill
   \mbox{}
   \caption 
   {
     Illustration of the solution for $S=[0,4,7]$.
     (a) A layout of the integers into an infinite vertical slab of width $4$.
     (b-c) Every translation of $S$ corresponds to a triple of cells that form an $L$-shape.
     (d) If the $L$-shape falls on the edge of the slab, the pieces ``wrap around'' but are shifted vertically.
     (e) A valid shooting pattern.
   }
   \label {fig:vertical-slab}
 \end {figure}
 
\begin{proof} 
  First, we argue that for a single 3-ship $S$, the toughest instance has a density of $\frac25$; that is, $M^1_3=\frac25$.
  This fact was first proven by Newman~\cite{newman1967complements}.
  Here, we present a geometric proof, which we then extend to the case of two symmetric 3-ships.
  
  Consider a 3-ship $S = [0, a, a+b]$ for positive integers $a$ and $b$ with $\operatorname{GCD}(a,b)=1$ and $a\ge b$. We arrange the integers into a 2-dimensional grid $\{0, \ldots, a-1\} \times \Z$ by the bijection $(i,j) \mapsto ib + ja$.
  Refer to Figure~\ref {fig:vertical-slab}(a).
  Most translations of $S$ now correspond to an L-triomino with the same orientation; therefore, we can hit all of them with a shooting pattern with density $\frac13$ using the same solution as in Figure~\ref {fig:infinite-2d}.
  However, this misses exactly the translations by an amount that is congruent to $-b \mod a$; those translations correspond to a triomino that ``wraps around'' (Figure~\ref {fig:vertical-slab}(d)). To hit these it is sufficient to increase the density of the first column to $\frac23$. This gives $\den(\{S\})\le (a+1)/(3a)$, which is strictly less than $2/5$ for $a\ge 6$. 
  For the remaining $10$ cases with $b<a\le 5$, we find an optimal solution by~\cref{thm:algorithm}. The toughest instances, yielding $\den(S)=\frac25$, turn out to be $S\in\{[0,2,3],[0,3,4]\}$ as claimed. 
  
  Now, fix $S$ and consider a family $\F = \{S, \reverse{S}\}$.
  We claim that $\pi(\F) \le 2/5$ as well.
  Indeed, as in Lemma~\ref {lem:2l}, when $a \ge 6$, the solution described above hits not only every translate of $S$ but also every translate of $\reverse{S}$.
  For the 10 cases with $b<a \le 5$, using~\cref{thm:algorithm} we again verify that all such families $\F=\{S,\reverse{S}\}$ satisfy $\den(\F)\le \frac25$, with equality for $S\in\{[0,2,3], [0,3,4]\}$.
\end{proof}

We note that Theorems \ref{thm:formula22}, \ref{thm:n2} and \ref{thm:32} can be generalized to higher dimensions. Notes on these extensions can be found in~\cref{app:d=2}.

\begin{figure}[ht]
  \centering
  \includegraphics[width=0.8\linewidth]{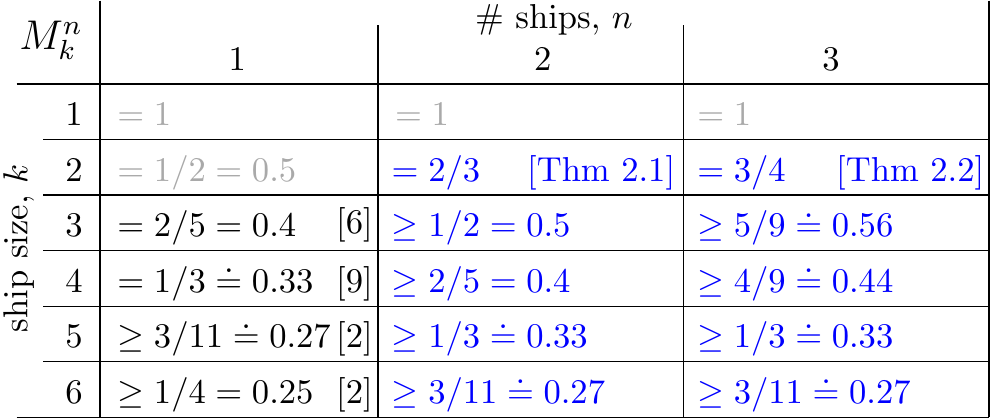}
  \caption{Bounds on the density $M^n_k$ of the toughest instances among the families with $n$ ships of size $k$ each, found by running the algorithm in~\cref{thm:algorithm} for all families with span up to $11-n$.}
  \label{fig:table}
\end{figure}

\section{Conclusions}
We introduced the problem of locating a battleship of an uncertain shape.
Given the difficulty of the problem in general, we focused on the simplest possible setting, namely ships of size 2 or 3 in 1D (see~\cref{app:d=2}
for extensions to 2D).
We also implemented an algorithm for computing $\den(\F)$ in 1D and used it to compute lower bounds on the minimum density $M^n_k$ required for the toughest families of $n\le 3$ ships of size $k\le 6$ each, see~\cref{fig:table}.
Many open problems arise, e.g.:
\begin{enumerate}
    \item Which values in~\cref{fig:table} are tight? For instance, is it true that $M^2_3=1/2$?
    \item What are the asymptotics of $1-M^n_k$ for fixed small $k\ge3$? 
    \item In 2D, is there an algorithm for computing $\den(\F)$?
\end{enumerate}


\appendix

\section{Proof of~\cref{thm:algorithm} and Lemma~\ref{lem:scaling}}\label{app:algos}

\textbf{Proof of~\cref{thm:algorithm}.}
This is a direct extension of~\cite[Remark 5.4]{bollobas2011covering}.
Briefly, the algorithm finds the minimum mean cycle in an auxiliary graph whose nodes are 01-sequences of length $s$ that are valid shooting patterns for $\F$, and where a directed edge $x\to y$ connects sequences such that the last $s-1$ entries of $x$ coincide with the first $s-1$ entries of $y$.\\

\noindent\textbf{Proof of Lemma~\ref{lem:scaling}.}
Any shooting pattern $X$ for $\F$ can be
scaled by a factor of $d$ and used $d$ times (once for each remainder mod $d$) to give a shooting pattern $X'$ for $\F'=d\F$ with the same density.
(Formally, we set $x'_{kd+r}=x_k$ for any $k\in\Z$ and $0\le r\le d-1$.)
And vice versa, starting with a shooting pattern $X'$ for $\F'$,
for each $0\le r\le d-1$, consider the pattern $X^r$ defined by $x^r_i=x'_{id+r}$.
Then each $X^r$ hits $\F$ and the sparsest of them has density at most $\den(X)$.

\section{Proof of bounds for $m_k^n$ and 
$M_k^n$}\label{app:general-ub}

Here we state and prove our general bounds on the required
densities $m^n_k$ and $M^n_k$ for the simplest and toughest instances among families consisting of $n$ ships of size $k$ each.
\begin{theorem*}
Let $n\ge 1$ and $k\ge 2$ be integers.
Then $m_k^n=\frac{1}{k}$ and 
\[1-\frac{e}{\sqrt[k-1]{n}} \le M_k^n\le \min\left\{\frac{n}{n+1},\frac{1+\log(kn)}{k}\right\}.\]
\end{theorem*}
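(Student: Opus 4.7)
The plan is to establish the four inequalities separately.

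\emph{Part (a): $m_k^n = 1/k$.} The lower bound follows from the standard fact that even a single $k$-ship already requires density at least $1/k$, since each hit cell lies in at most $k$ translates. For the matching upper bound, I would take $n$ pairwise distinct $k$-ships, each meeting every residue class modulo $k$, for instance $S_i = [0, 1, \dots, k-2, ik-1]$ for $i = 1, \dots, n$; the pattern defined by $x_j = 1$ iff $j \equiv 0 \pmod k$, of density $1/k$, hits every translate of every such $S_i$.

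\emph{Part (b): Upper bounds on $M_k^n$.} For $M_k^n \le n/(n+1)$, I would extend the greedy construction from~\cref{thm:n2}: after translating each ship so that $\min S_i = 0$, fix some $b_i \in S_i \setminus \{0\}$, initialize $x_t = 1$ near the origin, then sweep outward, setting $x_t = 0$ whenever possible and compensating by setting $x_{t+b_i} = 1$ for each $i$. Each fresh $0$ spawns at most $n$ new $1$s, giving density $\le n/(n+1)$. For $M_k^n \le (1+\log(kn))/k$, I would use the classical random-then-patch probabilistic argument on a long period $p$: include each cell independently with probability $q$ (expected density $q$) and then patch each of the $np$ translates of $\F$ that is missed (expected count $\le np(1-q)^k \le npe^{-qk}$) by adding a single cell. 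The expected total density is $q + ne^{-qk}$, minimized at $q = \log(kn)/k$ to give $(1+\log(kn))/k$; some realization meets this bound.

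\emph{Part (c): Lower bound $M_k^n \ge 1 - e/n^{1/(k-1)}$.} This is the heart of the theorem, and I would prove it by an explicit construction. Let $s$ be the largest integer with $\binom{s}{k-1} \le n$, and let $\F$ consist of all $\binom{s}{k-1}$ ships $[0, i_1, \dots, i_{k-1}]$ with $1 \le i_1 < \cdots < i_{k-1} \le s$, padded arbitrarily to reach $|\F| = n$ (extra ships only tighten the constraints). Any set $Y \subset \Z$ whose complement hits $\F$ must satisfy $|Y \cap [t, t+s]| \le k-1$ for every $t$: if such a window contained $k$ elements of $Y$ with minimum $t^*$, their offsets to $t^*$ would form some ship in $\F$, contradicting that the complement of $Y$ hits that translate. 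Partitioning $\Z$ into consecutive windows of length $s+1$ forces the density of $Y$ to be at most $(k-1)/(s+1)$, and hence $\pi(\F) \ge 1 - (k-1)/(s+1)$.

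It remains to check $(k-1)/(s+1) \le e/n^{1/(k-1)}$. The maximality of $s$ gives $n < \binom{s+1}{k-1} \le (s+1)^{k-1}/(k-1)!$, so $(s+1)^{k-1} > n(k-1)!$. Combined with the elementary estimate $(k-1)! \ge ((k-1)/e)^{k-1}$ (obtained by comparing $e^{k-1} = \sum_{j\ge 0}(k-1)^j/j!$ to its $(k-1)$-th term), this yields $s+1 > (k-1) n^{1/(k-1)}/e$, as required. The main obstacle is identifying this combinatorial construction: the natural probabilistic approach gives the correct exponent only after a delicate union bound over candidate $\F$-free sets, whereas the explicit family above recovers the exponent $1/(k-1)$ immediately from pigeonhole on $(s+1)$-windows, and in fact matches the tight value $M_2^n = n/(n+1)$ in the case $k = 2$.
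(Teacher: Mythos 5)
Your proof is correct and follows essentially the same route as the paper's: the same block/pigeonhole argument and multiples-of-$k$ construction for $m_k^n=1/k$, the same extremal family (all $k$-subsets of a window anchored at its first cell) with the same $(k-1)!\ge((k-1)/e)^{k-1}$ estimate for the lower bound on $M_k^n$, and the same greedy and random-then-patch arguments for the two upper bounds. The only cosmetic differences are that you run the greedy sweep directly on a chosen $2$-subship of each $S_i$ rather than invoking \cref{thm:n2} via sub-ships, and you spell out the probabilistic calculation that the paper delegates to Alon--Spencer.
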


\begin{proof}
We start with the proof for $m_k^n=\frac{1}{k}$.
First, we prove that $\den(S)\geq \frac{1}{k}$ for any single $k$-ship $S$.
Let $s=\spn(S)$ be the span of $S$,
take $n$ large and consider a block $B_n$ of length $n+s-1$.
Then $B_n$ contains $n$ translates of $S$.
Since any shot in $B_n$ hits at most $k$ translates of $S$ in $B_n$, there must be at least $n/k$ shots in $B_n$.
Thus the density of shots within $B_n$ is at least 
$(n/k)/|B_n| \to_{n\to\infty} \frac1k$.

On the other hand, for any $n,k\in\N$ the family \[\F=\{[1,\dots,k-1,k], [1,\dots,k-1,2k],\dots,[1,\dots,k-1,nk]\}
\]
can be hit
by a shooting pattern $X=\{c\cdot k\mid c\in\Z\}$ that has density $\den(X)=1/k$.

Regarding $M_k^n$, we show the lower bound first.
The idea is to consider a block $B$ of a suitable length $m$ and define $\F$ as the set of all $ \binom{m-1}{k-1}$ ships of size $k$ which fit into this block and contain its first cell.
Then every $k$-tuple within $B$ is either a ship of $\F$ or a translate of a ship of $\F$.
Hence, in order to hit $\F$ we have to hit $B$ at least $m-(k-1)$ times, which implies $\den(\F)\ge \frac{m-(k-1)}{m}$.

Now we calculate a suitable $m$. Note that we must make sure that $n\ge \binom{m-1}{k-1}$ and $m\in\N$.
Using $\binom{m-1}{k-1}\le \frac{(m-1)^{k-1}}{(k-1)!}$ it suffices to have
\begin{equation*}
n \geq \frac{(m-1)^{k-1}}{(k-1)!}
\quad \Leftrightarrow \quad
m \leq 1+\sqrt[k-1]{n(k-1)!}.
\end{equation*}
Setting $m=\lfloor 1+ \sqrt[k-1]{n(k-1)!}\rfloor$ and using a standard bound $t!\ge (t/e)^t$ for $t=k-1$, we~get
\[m=\lfloor 1+ \sqrt[k-1]{n(k-1)!} \rfloor > \sqrt[k-1]{n(k-1)!}\ge \frac{k-1}e\sqrt[k-1]{n},
\]
thus
\[
\den(\F)\geq\frac{m-(k-1)}{m}=1-\frac{k-1}m \ge 1-\frac{e}{\sqrt[k-1]{n}}.\]

For the upper bound,
the expression $n/(n+1)$ is inherited from~\cref{thm:n2}.
The second expression comes from calculations that N.Alon \cite{alon-1990} did to compute an upper bound on minimal transversals of hypergraphs.

Namely, we take a period $s \geq \spn(F)$ and we mark one cell of the ship with an anchor.
Obviously, we can define a bijection between the translates of a ship~$S$ modulo $s$ and $\{0,1,\dots, s-1\}$ by identifying each translate with point where the anchor of~$S$ is positioned. 
Now, we construct a digraph~$G = (V,E)$ with~$s$ vertices and add a directed edge with color $c \in F$ from point~$a$ to point~$b$ if a mark at point~$a$ hits the translate of $c$ with anchor at point~$b$.

Note that a feasible shooting pattern corresponds to a directed dominating set in each monochromatic subgraph induced by~$V$ of~$G$.
Further, all vertices in~$G$ have exactly~$k-1$ incoming and outgoing edges of the same color.
Now, we choose a random set $X$ from $\{0,1,\dots, s-1\}$ and for each $c\in F$, we add the set $Y_c$ that consist of elements where an anchor of a translate of $c$ that is not hit by $X$ is located.
Completely analogous calculations as in the proof of Theorem 1.2.2 in \cite{probmeth} then yield the desired upper bound of $(1+\log(kn))/k)$.
\end{proof}

Note that the proof also works for higher dimensions.

\section{Notes on generalizations to higher dimensions}\label{app:d=2}

In this section, we show how to generalize Theorems~\ref{thm:formula22}, \ref{thm:n2}, and \ref{thm:32} to two dimensions. 
Interestingly, it turns out that all three results are actually \emph {easier} in $\Z^2$; intuitively this is because for small ships, either the ships are all in a one-dimensional subspace, or they can be decomposed into smaller components for which the reasoning is simpler. 
We believe that for two-dimensional ships of size $4$, or for at least three two-dimensional ships of size $3$, the situation in 2D does become more complex.

First, we generalize Theorem \ref{thm:formula22}.

\begin{theorem}[Formula for two 2-ships in 2D]\label{thm:formula22-2d}
Let $\F=\{[0,u],[0,v]\}$, for $u, v \in \Z^2$.
\begin{enumerate}
\item If $u$ and $v$ are linearly independent, then $\den(\F)=1/2$.
\item If $u=aw$ and $v=bw$ for some $w \in \Z^2$ and $a$, $b$ coprime and odd, then $\den(\F)=1/2$.
\item Otherwise, $\den(\F)=\frac{a+b+1}{2(a+b)}$.
\end{enumerate}
\end{theorem}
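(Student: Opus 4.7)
The plan is to reduce this 2D theorem to the 1D formula in \cref{thm:formula22} by partitioning $\Z^2$ into cosets of an appropriate sublattice and solving each coset as an independent subproblem.

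\textbf{Case 1 ($u, v$ linearly independent).} I would let $L = \Z u + \Z v \subset \Z^2$; since $u, v$ are linearly independent, $L$ is a full-rank sublattice of finite index $d = |\det(u\,|\,v)|$. Partition $\Z^2$ into the $d$ cosets of $L$. Because both translation vectors $u$ and $v$ lie in $L$, every translate of $[0,u]$ or $[0,v]$ stays within a single coset, so each coset yields an independent shooting problem. Identifying each coset with $\Z^2$ via the basis $(u, v)$, the two ships become the orthogonal unit dominoes $[(0,0),(1,0)]$ and $[(0,0),(0,1)]$, both of which are hit by the checkerboard pattern ``shoot $(i,j)$ iff $i+j$ is even'' at density $1/2$. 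Pasting the resulting patterns on all $d$ cosets gives a 2D shooting pattern of density $1/2$. The matching lower bound is immediate, since $\den(\F) \ge \den(\{[0,u]\}) = 1/2$.

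\textbf{Cases 2 and 3 ($u, v$ parallel).} Here I would write $u = aw$ and $v = bw$ with $w \in \Z^2$ primitive and $\gcd(a,b) = 1$, and partition $\Z^2$ into the (infinitely many) cosets of the rank-1 sublattice $\Z w$, each isomorphic to $\Z$. Since $u, v \in \Z w$, all ship translates stay within individual lines, and on each line the problem reduces to the 1D family $\F' = \{[0,a],[0,b]\}$. By \cref{thm:formula22}, the optimal 1D density is $1/2$ if both $a$ and $b$ are odd (Case~2) and $\frac{a+b+1}{2(a+b)}$ otherwise (Case~3). Applying the optimal 1D pattern on every line yields a 2D shooting pattern of that density; the matching lower bound follows by restricting any valid 2D pattern to a single line.

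The only subtlety I foresee, and the main bookkeeping obstacle, is verifying that asymptotic density in $\Z^2$ aggregates correctly across these coset decompositions. For the finite-index decomposition of Case~1, a large axis-aligned box meets each coset in roughly $|B_N|/d$ points, so the 2D density equals the average of the per-coset densities. For the rank-1 decomposition of Cases~2 and 3, I would pick a box aligned with $w$ and a complementary primitive vector $w'$ so that $(w, w')$ is a basis of $\Z^2$; then each slab parallel to $\Z w$ contributes the same 1D density and the 2D density coincides with the 1D value. Since the substantive mathematical content is already captured by \cref{thm:formula22}, I expect the 2D extension to amount essentially to this lattice-decomposition bookkeeping on top of the 1D result.
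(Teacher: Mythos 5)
Your proposal is correct and follows essentially the same route as the paper: decompose $\Z^2$ into cosets of the sublattice $\Z u+\Z v$ and use a checkerboard in the independent case, and into cosets of $\Z w$ reducing to \cref{thm:formula22} in the parallel case. One tiny slip: you cannot in general take $w$ both primitive and have $\gcd(a,b)=1$ (e.g.\ $u=(2,0)$, $v=(4,0)$ forces $w=(2,0)$); the argument needs only the coprimality of $a$ and $b$, and the coset decomposition works for any such $w$, so nothing breaks.
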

\begin{proof}
A 2-ship in $\mathbb{Z}^2$ is of the form $[(0,0), (x,y)]$. So, we can think of it as the vector $(x,y)$. Now suppose we have two two ships, represented by two vectors $u$ and $v$.\\
If $u$ and $v$ are linearly independent, then the optimal density is 1/2. To see this, consider the subgrid of cells of the form $(au + bv)$ for integers $a$ and $b$. Any shooting pattern needs to hit half of the cells of this grid (either the cells where $a+b$ is odd or where $a+b$ is even). Since we can tile the plane with independent copies of this subgrid, we have $\den(\F)\le1/2$. The other inequality is trivial.\\
If $u$ and $v$ are not independent, then we are in the 1D case: let $w$ be the vector such that $u = aw$ and $v = bw$ for coprime integers $a$ and $b$, then we can tile the plane with copies of the space generated by $w$ and apply the existing 1D result (\cref{thm:formula22}) to each copy.
\end{proof}

Second, we generalize Theorem \ref{thm:n2}.
\begin{theorem}[Toughest families of 2-ships in 2D]\label{thm:n2-2d}
Let $\F=\{[0,u_1],\dots,[0,u_n]\}$ for $u_1,\dots,u_n\in\Z^2$. Then $\den(\F)\le n/(n+1)$.
\end{theorem}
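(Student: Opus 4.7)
The plan is to reduce the 2D statement to the already-established 1D case (Theorem~\ref{thm:n2}) via a linear projection.

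First I would pick a surjective linear functional $\phi:\Z^2\to\Z$, of the form $\phi(x,y)=ax+by$ with $\gcd(a,b)=1$, such that $\phi(u_i)\neq 0$ for every $i=1,\dots,n$. Such a $\phi$ exists because each nonzero $u_i\in\Z^2$ is killed by exactly one primitive pair $(a,b)$ up to sign, so only $n$ forbidden directions must be avoided; infinitely many valid choices remain.

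Next I would form the induced 1D family $\F':=\{[0,\phi(u_1)],\dots,[0,\phi(u_n)]\}$, which consists of at most $n$ 2-ships in $\Z$ (coincidences among the $\phi(u_i)$ only help). Applying Theorem~\ref{thm:n2} yields a 1D shooting pattern $Y:\Z\to\{0,1\}$ hitting every translate of every ship in $\F'$ with density $\den(Y)\le n/(n+1)$. I would then define the lift $X:\Z^2\to\{0,1\}$ by $X(t):=Y(\phi(t))$. For every $t\in\Z^2$ and every index $i$, the identity
\[X(t)+X(t+u_i)=Y(\phi(t))+Y(\phi(t)+\phi(u_i))\ge 1\]
holds because $Y$ hits the 1D ship $[0,\phi(u_i)]$, so $X$ hits $\F$.

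The only delicate point --- and the main obstacle I expect --- is to verify $\den(X)=\den(Y)$. Since $\phi$ is surjective with kernel a rank-$1$ sublattice $K\subset\Z^2$, the level sets $\phi^{-1}(k)$ are parallel cosets of $K$ that equidistribute in $\Z^2$, so averaging $X=Y\circ\phi$ over a growing box $[-N,N]^2$ yields the same limit as averaging $Y$ over $\Z$ by a short counting argument. This is cleanest if $Y$ is taken periodic --- which the construction behind Theorem~\ref{thm:n2} permits, by truncating the greedy to a long initial segment and extending periodically --- in which case $\den(X)=\den(Y)$ is immediate by counting $1$s inside a fundamental domain of $\phi^{-1}(P\Z)$, where $P$ is the period of $Y$.
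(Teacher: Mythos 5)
Your proof is correct, but it takes a genuinely different route from the paper. The paper argues directly in 2D: it splits the plane into quadrants and runs a two-dimensional analogue of the greedy construction from \cref{thm:n2}, sweeping anchor cells through each quadrant in a snake-like order so that each cell left unshot forces at most $n$ new shots. You instead reduce to 1D: choose a primitive linear functional $\phi(x,y)=ax+by$ with $\phi(u_i)\neq 0$ for all $i$ (possible since each nonzero $u_i$ forbids only one primitive direction up to sign; note both proofs implicitly assume $u_i\neq 0$, as $[0,0]$ would force density $1$), apply \cref{thm:n2} to $\{[0,\phi(u_i)]\}$, and pull the pattern back along $\phi$. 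The hitting property of the lift is immediate, and your density bookkeeping is sound: the 1D greedy has finite state (a window of length $\max_i|\phi(u_i)|$), so it is eventually periodic, its periodic tail extends to a bi-infinite periodic pattern of density at most $n/(n+1)$ that still hits all translates, and then $\den(X)=\den(Y)$ follows because the fibers of $t\mapsto\phi(t)\bmod P$ are cosets of an index-$P$ sublattice of $\Z^2$, each of box density $1/P$. Your reduction is arguably cleaner and generalizes verbatim to $\Z^d$ while reusing \cref{thm:n2} as a black box; the paper's direct construction avoids the periodicity/equidistribution step but requires describing and verifying a bespoke 2D sweep order. Both establish the claimed bound.
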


\begin{proof}
First, we divide the plane into the 4 quadrants and  mark the cells along the axes by $1$'s such that the patterns in the quadrants are independent of each other. These areas have measure 0, so the density depends only on the density of the patterns within the individual quadrants. \\
Now we create a shooting pattern for the first quadrant with the same greedy algorithm as in the $d=1$ case. We anchor the ships at their leftmost cell (if there a two leftmost cells at the lowest leftmost cell) which will be our reference point such that they are contained in a $s\times t$ box. Then we move the anchored cells along the plane as follows: Start at the leftmost lowest unmarked cell $(a,b)$ and go up $(a,b+2t)$ cells. Now, go from $(a+1,b)$ to $(a+1,b+t)$. Next $(a,b+2t+1)$ to $(a,b+2t)$, then $(a+1, b+t+1)$ to $(a+1,b+2t)$ and $(a+2,b)$ to $(a+2,b+t)$ and so on. If the anchored parts are at $0$, we mark all cells where unanchored parts of the ships are located. By our moving pattern, we avoid marking any checked $0$'s.  
Since we add at most $n$ $1$’s at a $0$, the density of the pattern is at most $n/(n+1)$. Similarly, we can create patterns for all other quadrants.
\end{proof}

Third, we generalize Theorem \ref{thm:32}.
\begin{theorem}[Toughest 3-ship with its reflection in 2D]\label{thm:32-2d}
Given $u,v\in\Z^2$, let $S=[0,u,v]$ be a 3-ship in 2D,
let $\reverse{S}=[0,-u,-v]$ its reflection, and
set $\F=\{S,\reverse{S}\}$.
\begin{enumerate}
    \item If $u$ and $v$ are linearly independent, then $\den(\F)= 1/3$.
    \item Otherwise $\den(\F)\le \frac25$.
\end{enumerate}
\end{theorem}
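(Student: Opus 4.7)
The plan is to handle the two cases separately, in each case reducing to a setting already understood.

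\textbf{Case 1 (linearly independent $u,v$).} The lower bound $\den(\F)\ge 1/3$ is the standard counting bound: in any large box the number of translates of $S$ matches the number of cells up to lower order, and each shot hits at most three of these translates. For the matching upper bound, I would exhibit a $3$-coloring of $\Z^2$ in which every translate of $S$ and of $\reverse S$ receives all three colors; each color class is then a valid shooting pattern of density $1/3$. On the sublattice $\Lambda=\Z u+\Z v$ assign color $(\alpha+2\beta)\bmod 3$ to the point $\alpha u+\beta v$. A direct check shows that the three points of $\{0,u,v\}$ receive three consecutive residues $\{c,c+1,c+2\}\bmod 3$, and similarly for $\{0,-u,-v\}$; translation within $\Lambda$ is color-preserving in the sense that only $c$ shifts. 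I would then extend the coloring to all of $\Z^2$ by fixing a set of coset representatives of $\Lambda$ in $\Z^2$ and applying the same rule inside each coset; since every translate of $S$ or $\reverse S$ lies entirely in a single coset (because $u,v\in\Lambda$), the three-color property is preserved.

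\textbf{Case 2 (linearly dependent $u,v$).} Write $u=aw$, $v=bw$ for a primitive $w\in\Z^2$ and integers $a\neq b$, neither zero (the degenerate options $u=0$, $v=0$, or $u=v$ are excluded because $S$ has size $3$). Partition $\Z^2$ into the cosets of the rank-$1$ sublattice $\Z w$; these are parallel discrete lines in direction $w$, and every translate of $S$ or $\reverse S$ is contained in a single line. On each line the problem becomes one-dimensional for the $3$-ship $S_0=[0,a,b]$ and its reflection $\reverse{S_0}$. By Theorem~\ref{thm:32} applied on each line independently, I obtain a shooting pattern of density at most $2/5$ along every line, which assembles into a pattern on $\Z^2$.

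The main obstacle is the final density bookkeeping in Case 2: the lines are $1$-dimensional subsets of $\Z^2$, so one must check that applying a density-$\alpha$ pattern along each line yields a density-$\alpha$ pattern in $\Z^2$. This is a routine averaging argument made clean by primitivity of $w$: each coset contains an arithmetic progression of step $\|w\|_\infty$-like spacing, the number of integer points of the cosets inside $[-N,N]^2$ sums to $(2N+1)^2$, and the same sum applied to the hitting set gives density $\alpha=2/5$ in the limit. A small sanity-check that I would include is that in Case 1 one cannot do better than $1/3$ even with $\reverse S$ added for free (since $\reverse S$ is just a rotation of $S$ and does not introduce essentially new constraints), which is exactly what the $3$-coloring reveals.
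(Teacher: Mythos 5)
Your proposal is correct and follows essentially the same route as the paper: in the independent case your coloring $(\alpha+2\beta)\bmod 3$ on $\Lambda=\Z u+\Z v$ is exactly the paper's pattern $\{au+bv\mid a-b\equiv 0\pmod 3\}$ extended coset by coset, and in the dependent case both arguments reduce along the lines $\Z w$ to the one-dimensional Theorem~\ref{thm:32}. You simply spell out the lower bound and the density bookkeeping across cosets, which the paper leaves implicit.
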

\begin{proof}
If $u$ and $v$ are linearly independent then, similarly to~\cref{thm:formula22-2d}, consider the lattice $\Lambda=\{au+bv\mid a,b\in\Z\}$ generated by $u$ and $v$.
Within the lattice $\Lambda$, the ship $S$ is an L-shaped triomino, and $\reverse{S}$ is its rotation by $180^\circ$, thus, as in Lemma~\ref{lem:2l},
the subset $X=\{au+bv \mid a-b\equiv0\pmod 3\}$ is a shooting pattern for $\F$.
Since $\Z^2$ is a disjoint union of copies of $\Lambda$, using this shooting pattern for each copy of $\Lambda$ we get the desired $\den(\F)=\frac13$.

The second claim follows immediately from~\cref{thm:32}.
\end{proof}

\end{document}